\DeclarePairedDelimiter\ceil{\lceil}{\rceil}
\DeclarePairedDelimiter\floor{\lfloor}{\rfloor}
\newtheorem{proposition} {Proposition}
\newtheorem{theorem}{Theorem}
\newenvironment{proof}{\noindent{\bf Proof:}\indent}%
                      {\hfill $\Box$\par}
\newcommand{\sym}[1]{{\sf #1}}
\title{Separation Results for Boolean Function Classes} %\thanks{Work in progress.}}
\author{Aniruddha Biswas and Palash Sarkar \\
Indian Statistical Institute \\
203, B.T.Road, Kolkata \\
India 700108. \\
Email: \{anib\_r, palash\}@isical.ac.in
}
\date{\today}
\begin{document}

\maketitle

\begin{abstract}
	We show (almost) separation between certain important classes of Boolean functions. The technique that we use is to show that the total influence of functions
	in one class is less than the total influence of functions in the other class. In particular, we show (almost) separation of several classes of
	Boolean functions which have been studied in the coding theory and cryptography from classes which have been studied in combinatorics and complexity theory. \\
	{\bf Keywords: Boolean function, total influence, monotone functions, bent functions, strict avalanche criteria, propagation characteristic, plateaued function, 
	constant depth circuits, linear threshold function.}
\end{abstract}

\section{Introduction \label{sec-intro} }
Boolean functions are one of the most basic objects in coding theory, cryptography and computer science. Several important classes of Boolean functions have been 
extensively studied in different areas. Some examples in the context of coding theory and cryptography are the classes of bent~\cite{rothaus1976bent} functions, 
plateaued~\cite{zheng1999relationships} functions, functions satisfying the strict avalanche criteria (SAC)~\cite{webster1985design}, and functions satisfying 
propagation characteristics (PC)~\cite{preneel1990propagation}. Examples in the context of combinatorics and complexity theory are monotone functions, 
bounded-depth circuits~\cite{haastad1987computational} and linear threshold functions~\cite{chow1961characterization}.

If the intersection of two classes of Boolean functions is empty, then the classes are disjoint. For two infinite classes of Boolean functions, we say that they are almost 
disjoint, if their intersection is a finite set. Being almost disjoint implies that there is a positive integer $n_0$ such that for any $n\geq n_0$, there is no $n$-variable 
Boolean function which belongs to both the classes. 

Our goal is to show that several pairs of classes of Boolean functions are almost disjoint. We use the notion of total influence~\cite{21923} to show such separation.
The technique that we use is to show that for sufficiently large $n$,
the total influence of any $n$-variable Boolean function in one of the classes is less than the total influence of any $n$-variable Boolean function in the other class.
%It turns out that there are some known results on total influence which can be effectively used with this technique.
The specific results that we obtain are the following. The class of Boolean functions consisting of bent functions and functions satisfying SAC and PC is almost
disjoint from the class of monotone functions; the class of functions which can be implemented using constant depth, polynomial size circuits; and the class
of linear threshold functions. Similar separation results are obtained for the class of plateaued functions.

The separation of bent and monotone functions was conjectured in~\cite{celerier2012walsh} and proved in~\cite{carlet2016cryptographic}. Our proof which is based on
total influence is shorter. While~\cite{carlet2016cryptographic} had considered cryptographic properties of monotone functions, the classes of SAC, PC and plateaued
functions were not considered in~\cite{carlet2016cryptographic}. So, the separation results for these classes mentioned above are not present in~\cite{carlet2016cryptographic}.

\section{Background and Notation \label{sec-prelim} }

We use the notation $[n] = \{1, 2,\ldots, n\}$ and $2^{[n]}$ denotes the power set of $[n]$. 
Throughout we write $\log$ for $\log_2$. The expressions $0\log 0$ and $0\log \frac{1}{0}$ are to be interpreted as $0$.

For $\mathbf{x}\in\{-1,1\}^n$, the notation $\mathbf{x}^{\oplus i}$ denotes the vector $(x_1,\ldots , x_{i-1}, -x_i , x_{i+1}, \ldots , x_n)$. 
For $S\subseteq [n]$, the notation $\mathbf{x}^{\oplus S}$ denotes the vector $((-1)^{w_1}x_1,(-1)^{w_2}x_2,\ldots,(-1)^{w_n}x_n)$, where $w_i=1$ if $i\in S$, and $w_i=0$ otherwise. 

%\subsection{Basics of Boolean Fourier analysis}

We consider the domain of an $n$-variable Boolean function to be $\{-1,1\}^n$ and the co-domain to be $\{-1,1\}$, i.e., an $n$-variable Boolean function $f$ is a map
$f:\{-1,1\}^n\rightarrow \{-1,1\}$. Boolean functions are also considered to be maps from $\{0,1\}^n$ to $\{0,1\}$. The two representations of Boolean functions can be seen to
be equivalent under the map from $\{0,1\}$ to $\{-1,1\}$ which takes $a\in\{0,1\}$ to $(-1)^a$ and the inverse map which takes $b\in\{-1,1\}$ to $(1-b)/2$.

The Fourier transform of $f:\{-1,1\}^n\rightarrow \{-1,1\}$ is a map $\widehat{f}:2^{[n]}\rightarrow [-1,1]$ defined as follows. For $S\subseteq [n]$, 
\begin{eqnarray}\label{eqn-fourier}
	\widehat{f}(S) & = & \frac{1}{2^n} \sum_{\mathbf{x}=(x_1,\ldots,x_n)\in\{-1,1\}^n} f(\mathbf{x}) \prod_{i\in S}x_i.
\end{eqnarray}
From Parseval's identity, we have $\sum_{S \subseteq [n]}\widehat{f}^2(S) = 1$ and so the numbers $\widehat{f}^2(S)$ can be thought of as a probability distribution on $2^{[n]}$.
The Fourier entropy $H(f)$ of $f$ is defined to be the entropy of the probability distribution $\{\widehat{f}^2(S)\}_{S\subseteq [n]}$ and is equal to
\begin{eqnarray}\label{eqn-H}
	H(f) & = & -\sum_{S\subseteq [n]}\widehat{f}^2(S) \log \widehat{f}^2(S).
\end{eqnarray}
The influence of variable $i$ on $f: \{-1,1\}^n \rightarrow \{-1,1\}$ is defined as follows.
\begin{equation*}
    \sym{Inf}_{i}(f) = \Pr_{\mathbf{x} \in \{-1,1\}^n}[f(\mathbf{x}) \neq f(\mathbf{x^{\oplus i}})] = \sum_{\substack{S \subseteq [n] \\ i \in S}}\widehat{f}^2(S).
\end{equation*}
The total influence $I(f)$ of $f: \{-1, 1\}^n \rightarrow \{-1,1\}$ is given by
\begin{equation*}
    I(f) = \sum_{i=1}^n \sym{Inf}_{i}(f) = \sum_{S \subseteq [n]}|S|\widehat{f}^2(S).
\end{equation*}

The influence of a set of variables $S$ on $f: \{-1,1\}^n \rightarrow \{-1,1\}$ is defined to be
\begin{equation*}
	\sym{Inf}_{S}(f) = \Pr_{\mathbf{x} \in \{-1,1\}^n}[f(\mathbf{x}) \neq f(\mathbf{x}^{\oplus S})]
\end{equation*}

In Appendix~\ref{sec-extra}, we have stated the relationships between influence, auto-correlation and average sensitivity of a Boolean function.

\subsection{Some Boolean Function Classes \label{subsec-classes} }
We define the Boolean function classes that will be required in the present work. For some of the classes, the original definitions are different from, though equivalent to, 
the definitions that we provide. 

An $n$-variable Boolean function $f$ is said to be {\em monotone} if the following property holds. For $\mathbf{a},\mathbf{b}\in\{-1,1\}^n$, 
		if $\mathbf{a} \leq \mathbf{b}$ (i.e., $a_i\leq b_i$, $i=1,\ldots,n$), then $f(\mathbf{a})\leq f(\mathbf{b})$. Let $\sym{M}$ denote the set of 
		all monotone Boolean functions.

The following Boolean function classes have been studied in the context of coding theory and cryptography.
\begin{itemize}
   	\item  For even $n$, an $n$-variable Boolean function $f$ is said to be {\em bent}~\cite{rothaus1976bent}, if $\widehat{f}(S) = \pm \frac{1}{2^{n/2}}$, for all $S \in [n]$.
		Let $\sym{B}$ denote the set of all bent functions.
	\item An $n$-variable Boolean function $f$ satisfies the {\em strict avalanche criterion} (SAC)~\cite{webster1985design}, if 
		$\sym{Inf}_{i}(f) = \frac{1}{2}$, for all ${i}$. Further, we say that an $n$-variable function satisfies SAC of order $k$, $0\leq k \leq n-2$, 
		(written as SAC($k$)) 
		if by fixing any $k$ of the $n$ variables to arbitrary values in $\{-1,1\}$, the resulting function satisfies SAC. For $k\geq 0$, let 
		$\sym{S}_k$ denote the set of all Boolean functions satisfying SAC($k$), and define $\sym{S}=\cup_{k\geq 0}\sym{S}_k$.
		%Let $\sym{S}$ denote the set of all functions satisfying SAC($k$) for any $k\in \{0,\ldots,n-2\}$.
	\item An $n$-variable Boolean function $f$ satisfies {\em propagation characteristics}~\cite{preneel1990propagation} of degree $k$, $0\leq k \leq n$, 
		(written as PC($k$)) if $\sym{Inf}_{S}(f) = \frac{1}{2}$ for $1 \leq |S| \leq k$. For $k\geq 0$, let $\sym{PC}_k$ denote the set of all Boolean 
		functions satisfying PC$(k)$, and define $\sym{PC}=\cup_{k\geq 0}\sym{PC}_k$. 
		%Let $\sym{PC}$ denote the set of all functions satisfying PC($k$) for any $k\in \{0,\ldots,n\}$.
	\item An $n$-variable Boolean function $f$ is said to be $k$-{\em plateaued}~\cite{zheng1999relationships}, for $k\in\{0,\ldots,n\}$ and $n\equiv k\bmod 2$, if for 
		all $S \in [n]$, $\widehat{f}^2(S) \in \Big\{0, \pm \frac{1}{2^{(n-k)/2}}\Big\}$. For $k\geq 0$, let $\sym{PL}_k$ be the set of all $k$-plateaued
		Boolean functions, and define $\sym{PL}=\cup_{k\geq 0}\sym{PL}_k$. 
		%Let $(n,k)\mbox{-}\sym{PL}$ denote the set of all $n$-variable, $k$-plateaued functions.
\end{itemize}
We next define some Boolean functions classes which have been studied in the context of complexity theory.
\begin{itemize}
\item 
	A bounded-depth circuit \cite{haastad1987computational} for $n$ variables is a Boolean circuit that consists of AND and OR gates, with inputs $x_1,\ldots,x_n$ and 
	$\overline{x}_1,\ldots,\overline{x}_n$. Fan-in to the 
gates is unbounded but depth is bounded by a constant. Without loss of generality, the circuit is leveled, where gates at level $i$ have all their inputs from level $i-1$; all 
gates at the same level have the same type, i.e., all gates at a particular level are either AND or OR; and the types of gates alternate between AND and OR for successive
levels. The depth of such a circuit is the number of levels that it has. The size of a circuit is the number of gates in it. 
If the size of a bounded depth circuit is bounded by a polynomial in $n$, then it is called an $\sym{AC}^0$ circuit~\cite{linial1993constant}. 
The set of all Boolean functions computable by $\sym{AC}^0$ circuits of depth $d$ is denoted by $\sym{AC}^0[d]$.
\item A Boolean function $f$ is said to be a {\em linear threshold function}~\cite{chow1961characterization}, if there are real constants $w_0,w_1,\ldots,w_n$ such that for 
any $\mathbf{x}=(x_1,\ldots,x_n)\in\{-1,1\}^n$, $f(\mathbf{x})=\sym{sign}(w_0+w_1x_1+\cdots+w_nx_n)$, where $\sym{sign}(z) = 1$ if $z > 0$, and $-1$ if $z \leq 0$.
Let $\sym{LTF}$ denote the set of all linear threshold function.
\end{itemize}
    
%\begin{itemize}    
%    \item \textbf{Polynomial Threshold Function}: A degree-$d$ polynomial threshold function (PTF) is a Boolean-valued
%    function $f : \{-1,1\}^n \rightarrow \{-1,1\}$, $f(\mathbf{x}) = \sym{sign}(p(x_1,\ldots, x_n))$ where $p : \{-1,1\}^n \rightarrow \mathbb{R}$ is a degree-$d$ polynomial with real coefficients, where $\sym{sign}(z) = 1$ if $z > 0$, and $-1$ if $z \leq 0$. When $d = 1$ polynomial threshold functions are simply linear threshold functions(\sym{LTF}). 
%    
%    \item \textbf{Random \sym{LTF}}:\cite{chakraborty2018fourier} Recall that $f: \{-1,1\}^n \rightarrow \{-1,1\}$ is a \sym{LTF} if there exists $n + 1$ real numbers, $w_0,w_1,\ldots,w_n$, such that 
%    
%    \begin{equation*}
%        \forall{\mathbf{x} = x_1x_2\ldots x_n \in \{-1,1\}^n} : f(\mathbf{x}) = \sym{sign}(w_0 + w_1x_1 + \ldots w_nx_n)
%    \end{equation*}
%    
%    For example, $\sym{Maj}_n : \{-1,1\}^n \rightarrow \{-1,1\}$ defined as $\sym{Maj}_n(\mathbf{x}) = \sym{sign}(x_1 + x_2 + \ldots + x_n)$ is a \sym{LTF}. A \sym{LTF} is a random \sym{LTF} if $w_0,w_1,\ldots,w_n$ are drawn independently from a distribution $\mathcal{D}$.
%\end{itemize}
%\subsection{Prior Work on Total Influence}
Below we collect together some relevant results on total influence that will be required for proving separation results. Some of these results were stated in
terms of average sensitivity which is the same as total influence (see Appendix~\ref{sec-extra}).
\begin{description}
	\item{\textbf{Fact~1}~\cite{boppana1997average}:} For any $n$-variable function $f \in \sym{AC}^0[d]$, $I(f) =  O((\log n)^{d-1})$.
	
	\item{\textbf{Fact~2}~\cite{zhang2011note}:} For any non-constant $n$-variable monotone Boolean function $f$, 
		$I(f)\leq {n \choose \floor{n/2}}\ceil{n/2}/2^{n-1}$. Since ${n \choose \floor{n/2}}\ceil{n/2}/2^{n-1}=\Theta(\sqrt{n})$, we have $I(f)=O(\sqrt{n})$.
	\item{\textbf{Fact~3}~\cite{diakonikolas2014average}:} For any $n$-variable Boolean function $f$ in $\sym{LTF}$, $I(f)\leq 2\sqrt{n}$.
	\item{\textbf{Fact~4}~\cite{gangopadhyay2014fourier}:} For any $n$-variable Boolean function $f$ in $\sym{PL}_k$, $I(f)=\Omega(n-k)$.
\end{description}
%Chakraborty et al. proved that a random \sym{LTF} has influence $\Omega(\sqrt{n})$ with high probability \cite{chakraborty2018fourier}.

\section{Separation Results \label{sec-sep-results} }

For $n\in\mathbb{N}$, let $\mathcal{C}_n^1$ and $\mathcal{C}_n^2$, be two subsets of the set of all $n$-variable Boolean functions. Define 
$\mathcal{C}^1 \triangleq \bigcup_{n \geq 1}\mathcal{C}_n^1$ and $\mathcal{C}^2 \triangleq \bigcup_{n \geq 1}\mathcal{C}_n^2$. Then $\mathcal{C}_1$ and $\mathcal{C}_2$ are 
two infinite classes of Boolean functions. Suppose there exists a constant $n_0$, such that for all ${n \geq n_0}$, $\mathcal{C}_n^1 \bigcap \mathcal{C}_n^2 = \emptyset$. 
If $n_0=1$, then the classes $\mathcal{C}^1$ and $\mathcal{C}^2$ are disjoint. If $n_0>1$, then we say that the classes are $n_0$-disjoint.
Note that if the classes $\mathcal{C}^1$ and $\mathcal{C}^2$ are $n_0$-disjoint, then their intersection is finite, i.e., they are almost disjoint.

Let $\mathcal{C}^1$ and $\mathcal{C}^2$ be two classes of Boolean functions. To show separation between $\mathcal{C}^1$ and $\mathcal{C}^2$ we use the following idea.
Suppose it is possible to find a function $\mathcal{P}$ from the set of all Boolean functions to the reals and a positive integer $n_0$ such that for all 
$n\geq n_0$ and for all $f \in \mathcal{C}_n^1$ and $g \in \mathcal{C}_n^2$, $\mathcal{P}(f) < \mathcal{P}(g)$. Then, it follows that $\mathcal{C}_n^1$ and $\mathcal{C}_n^2$ are 
$n_0$-disjoint. We use total influence as the function $\mathcal{P}$.
To do so, we need results on total influence for both the classes. Results on total influence for
some of the classes have been provided in Section~\ref{sec-prelim}. The following result provides the value of total influence for functions in $\sym{B}\cup \sym{PC}\cup \sym{PL}$.
\begin{proposition}\label{prop_bent}
If an $n$-variable Boolean function $f$ is in $\sym{B}\cup \sym{S} \cup \sym{PC}$, then $I(f)=n/2$.
\end{proposition}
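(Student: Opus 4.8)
The plan is to prove the slightly stronger and more uniform statement that every $f$ in these three classes satisfies $\sym{Inf}_{i}(f) = \tfrac12$ for each $i \in [n]$. The conclusion is then immediate from the additive formula $I(f) = \sum_{i=1}^{n}\sym{Inf}_{i}(f)$ stated in Section~\ref{sec-prelim}, which gives $I(f) = n\cdot \tfrac12 = n/2$. So the whole proposition reduces to verifying the single-variable influence condition separately for $\sym{B}$, $\sym{S}$ and $\sym{PC}$, and in effect amounts to observing that each of these classes is contained in the set of functions satisfying SAC.

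For the SAC family the condition is essentially built into the definition. A function in $\sym{S}_0$ has $\sym{Inf}_{i}(f)=\tfrac12$ for all $i$ by definition, so nothing is needed. For $f\in\sym{S}_k$ with $k\geq 1$, I would fix any $k$-subset $T\subseteq[n]$ with $i\notin T$ (possible since $k\leq n-2$) and use the fact that single-variable influence averages over restrictions: conditioning on the values $\mathbf{a}$ of the variables in $T$ (law of total probability), $\sym{Inf}_{i}(f) = 2^{-k}\sum_{\mathbf{a}}\sym{Inf}_{i}(f_{T\to\mathbf{a}})$, where $f_{T\to\mathbf{a}}$ is the restriction fixing $T$ to $\mathbf{a}$ and the flipped coordinate $i$ stays free. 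By hypothesis each $f_{T\to\mathbf{a}}$ is SAC, so each inner term equals $\tfrac12$ and the average is again $\tfrac12$. For bent functions I would argue purely on the Fourier side: since $\widehat{f}^{\,2}(S)=2^{-n}$ for every $S\subseteq[n]$, the spectral formula gives $\sym{Inf}_{i}(f) = \sum_{S\ni i}\widehat{f}^{\,2}(S) = 2^{-n}\,|\{S\subseteq[n]: i\in S\}| = 2^{-n}\cdot 2^{n-1} = \tfrac12$ (as a check, $I(f)=2^{-n}\sum_{S}|S| = 2^{-n}\cdot n2^{n-1}=n/2$). For the PC family the key observation is that the defining set condition specializes to singletons: if $f\in\sym{PC}_k$ with $k\geq 1$, then taking $S=\{i\}$ yields $\sym{Inf}_{\{i\}}(f)=\tfrac12$, and since $\sym{Inf}_{\{i\}}(f)=\sym{Inf}_{i}(f)$ this is exactly what is required.

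I do not expect any step to be genuinely hard; the real work is bookkeeping. The one place that needs care is the higher-order SAC case, where one must justify the averaging identity for influence under restrictions and check the index ranges so that a $k$-subset $T$ avoiding $i$ actually exists (which is where $k\leq n-2$ is used). A secondary subtlety, which I would flag explicitly, is the degenerate case $\sym{PC}_0$: the condition ``$\sym{Inf}_{S}(f)=\tfrac12$ for $1\leq|S|\leq k$'' is vacuous when $k=0$, so $\sym{PC}_0$ as literally written contains constant functions with $I(f)=0$; the statement is meaningful only for $k\geq 1$, and I would restrict to that regime so that the singleton argument applies.
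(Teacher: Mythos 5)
Your proof is correct and follows essentially the same route as the paper: reduce each of the three cases to $\sym{Inf}_{i}(f)=\tfrac{1}{2}$ for every $i$ and sum, with the bent case handled directly on the Fourier side and the PC case by specializing the set condition to singletons. The only substantive differences are that you prove the SAC($k$)$\Rightarrow$SAC step yourself via the restriction-averaging identity where the paper simply cites a reference, and that you correctly flag the vacuous $\sym{PC}_0$ case (the condition $1\le|S|\le 0$ is empty, so $\sym{PC}_0$ as defined contains all Boolean functions), a degenerate case that the paper's own proof silently glosses over.
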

\begin{proof}
	For $f\in \sym{B}$, we have $\widehat{f}(S) = \pm \frac{1}{2^{n/2}}$, for all $S \in [n]$. Hence, 
\begin{equation*}
    I(f) = \sum_{S \subseteq [n]}|S|\widehat{f}(S)^2 = \frac{n}{2}.
\end{equation*}

	Suppose $f$ satisfies SAC($k$) for some $k\geq 0$. Then, it follows that $f$ satisfies SAC($j)$ for $1\leq j<k$ (see~\cite{cusick2017cryptographic}). So, in particular, $f$ satisfies SAC. 
From the definition of SAC, we have $\sym{Inf}_{i}(f) = \frac{1}{2}$, for all ${i}$ and so $I(f) = \frac{n}{2}$.

Suppose $f$ satisfies PC($k)$ for some $k\geq 0$. From the defintion of PC, we have $\sym{Inf}_{S}(f) = \frac{1}{2}$, for $1 \leq |S| \leq k$. 
Therefore, $\sym{Inf}_{\{i\}}(f) = \sym{Inf}_{i}(f) = \frac{1}{2}$, for all $i$ and consequently, $I(f) = \frac{n}{2}$.
\end{proof}

\begin{theorem} \label{thm-1}
	The following disjointness results hold for $\sym{B}\cup \sym{S}\cup \sym{PC}$.
\begin{enumerate}
\item $\sym{M}$ is $4$-disjoint from $\sym{B}\cup \sym{S} \cup \sym{PC}$. 
\item $\sym{LTF}$ is $16$-disjoint from $\sym{B}\cup \sym{S} \cup \sym{PC}$.
\item Let $d$ be any positive integer. Then there exists a positive integer $\mathfrak{n}_0$ (depending on $d$) such that $\sym{AC}^0[d]$ is $\mathfrak{n}_0$-disjoint from 
	$\sym{B}\cup \sym{S} \cup \sym{PC}$.
\end{enumerate}
\end{theorem}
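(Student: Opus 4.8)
The plan is to apply the separation principle stated above with the total influence $I$ playing the role of $\mathcal{P}$. By Proposition~\ref{prop_bent}, every $g\in\sym{B}\cup\sym{S}\cup\sym{PC}$ satisfies $I(g)=n/2$ exactly, so for each of the three target classes it suffices to exhibit, for all $n$ at least the claimed threshold, an upper bound on $I(f)$ valid for every $n$-variable $f$ in that class which is strictly smaller than $n/2$. The separation principle then delivers the stated $n_0$-disjointness. In each case the qualitative reason is the same: functions in $\sym{M}$, $\sym{LTF}$, and $\sym{AC}^0[d]$ have total influence $O(\sqrt{n})$, $O(\sqrt{n})$, and $O((\log n)^{d-1})$ respectively, each of which is eventually dominated by the linear quantity $n/2$.

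For part~1 I would start from Fact~2. A constant monotone function has $I=0<n/2$ and is eliminated at once, so attention reduces to non-constant monotone $f$, for which $I(f)\le\binom{n}{\lfloor n/2\rfloor}\lceil n/2\rceil/2^{n-1}$. The claim then amounts to the concrete inequality $\binom{n}{\lfloor n/2\rfloor}\lceil n/2\rceil/2^{n-1}<n/2$ for all $n\ge 4$. I would verify the small boundary cases $n=4,5$ directly and handle larger $n$ using the $\Theta(\sqrt{n})$ growth of the left-hand side (equivalently a standard estimate for the central binomial coefficient), since $\sqrt{n}<n/2$ with room to spare once $n$ is moderately large. It is worth noting that the same expression equals $n/2$ exactly at $n=2$ and $n=3$; this is precisely why the threshold is $4$ and cannot be lowered by an influence argument.

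For part~2 I would use Fact~3, giving $I(f)\le 2\sqrt{n}$ for every $f\in\sym{LTF}$. Since $2\sqrt{n}<n/2$ is equivalent to $n>16$, strict separation is immediate for all $n\ge 17$. The delicate point, and the step I expect to be the main obstacle, is the borderline value $n=16$, where $2\sqrt{n}=n/2=8$ and Fact~3 alone yields only $I(f)\le 8$, which does not by itself exclude the value $n/2$ attained on $\sym{B}\cup\sym{S}\cup\sym{PC}$. To obtain $16$-disjointness one must rule out equality at $n=16$, either by observing that the bound $2\sqrt{n}$ is never actually met by a threshold function (so the inequality is strict), or by a direct argument specific to $n=16$.

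Part~3 is the most straightforward conceptually. By Fact~1 there is a constant $C_d$ depending only on $d$ such that $I(f)\le C_d(\log n)^{d-1}$ for every $f\in\sym{AC}^0[d]$ once $n$ is large enough. Because a fixed polylogarithmic function satisfies $(\log n)^{d-1}=o(n)$, the inequality $C_d(\log n)^{d-1}<n/2$ holds for all sufficiently large $n$; choosing $\mathfrak{n}_0$ to be any such value (its size governed by $C_d$ and the exponent $d-1$, hence by $d$) gives the claimed $\mathfrak{n}_0$-disjointness. No explicit constant needs to be computed, so the entire content of this part is the elementary domination of a fixed polylog by $n/2$.
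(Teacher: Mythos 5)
Your proposal is correct and follows essentially the same route as the paper: Proposition~\ref{prop_bent} pins the total influence of every function in $\sym{B}\cup\sym{S}\cup\sym{PC}$ at exactly $n/2$, and Facts~1--3 supply the strictly smaller upper bounds for $\sym{M}$, $\sym{LTF}$ and $\sym{AC}^0[d]$. Your concern about the borderline case $n=16$ in part~2 is well founded and is in fact a (minor) gap in the paper's own proof, which only verifies $2\sqrt{n}<n/2$ for $n>16$ and hence, as written, establishes $17$-disjointness rather than $16$-disjointness; closing it requires exactly the kind of strictness argument at $n=16$ that you describe.
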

\begin{proof}
\noindent{\em Proof of the first point.}
For any $n$-variable monotone Boolean function $f$, from Fact~2, we have $I(f)\leq {n \choose \floor{n/2}}\ceil{n/2}/2^{n-1}$. Since, for
$n\geq 4$, we have ${n \choose \floor{n/2}}\ceil{n/2}/2^{n-1} < n/2$, it follows that for $n\geq 4$, $I(f)<n/2$. 
On the other hand, from Proposition~\ref{prop_bent} for any $n$-variable Boolean function in $\sym{B}\cup \sym{S} \cup \sym{PC}$, the total influence
is equal to $n/2$. So, $f$ cannot be in $\sym{B}\cup \sym{S} \cup \sym{PC}$. \ \\

\noindent{\em Proof of the second point.}
Let $f$ be any $n$-variable Boolean function in $\sym{LTF}$. From Fact~3, $I(f) \leq 2\sqrt{n}$. Now, for $n > 16$, $2\sqrt{n} < \frac{n}{2}$. Therefore, using 
Proposition~\ref{prop_bent}, we obtain the desired result. \ \\

\noindent{\em Proof of the third point.}
Let $f\in \sym{AC}^0[d]$. From Fact~1, we have $I(f)=O((\log n)^{d-1})$. Consequently, there is a constant $c$ and a positive integer $n_1$, such that
	$I(f)\leq c (\log n)^{d-1}$. Since $d$ is fixed, there is a positive integer $\mathfrak{n}_0$ such that $c (\log n)^{d-1} < n/2$ for all $n\geq \mathfrak{n}_0$. So, for 
	$n\geq \mathfrak{n}_0$, $I(f)<n/2$. From Proposition~\ref{prop_bent}, we have that for $n\geq \mathfrak{n}_0$, $f$ does not belong to $\sym{B}\cup \sym{S} \cup \sym{PC}$.
\end{proof}

The first point of Theorem~\ref{thm-1} provides a shorter proof of the fact that no monotone function on $n\geq 4$ variables is bent, a result which was conjectured 
in~\cite{celerier2012walsh} and originally proved in~\cite{carlet2016cryptographic}. 
The third point of Theorem~\ref{thm-1} shows that in general bent functions and also functions satisfying propagation characteristics and strict avalanche criteria
cannot be realised using constant depth circuits.

The notion of linear threshold function has been extended to polynomial threshold function. An $n$-variable Boolean function $f$ is said to be a degree $d$ polynomial
threshold function (PTF) \cite{bruck1990harmonic} if there is a polynomial $p$ such that $f(\mathbf{x})=\sym{sign}(p(\mathbf{x}))$ for all $\mathbf{x}\in\{-1,1\}$. It has been 
shown in~\cite{diakonikolas2014average} that
if $f$ is a degree $d$ PTF, then $I(f)\leq 2^{O(d)}\cdot \log n\cdot n^{1-1/(4d+2)}$. Let $\sym{PTF}_d$ be the set of all degree $d$ PTFs. 
In a manner similar to the proof of Theorem~\ref{thm-1}, it can be proved that for $d=\log^cn$ with $c<1/2$, the class $\sym{B}\cup \sym{S}\cup \sym{PC}$ is 
almost disjoint from $\sym{PTF}_d$.
It has been conjectured in~\cite{gotsman1994spectral} that if $f$ is any $n$-variable degree $d$ PTF, then $I(f)\leq d\sqrt{n}$. If the conjecture is true, it will show
that $\sym{B}\cup \sym{S} \cup \sym{PC}$ is $4d^2$-disjoint from $\sym{PTF}_d$.

\begin{theorem}\label{thm-2}
Let $k$ be a non-negative integer. The following disjointness results hold for $\sym{PL}_k$.
\begin{enumerate}
\item There exists a positive integer $\mathfrak{n}_0$ such that $\mathcal{M}$ and $\sym{PL}_k$ are $\mathfrak{n}_0$-disjoint.
\item There exists a positive integer $\mathfrak{n}_1$ such that $\sym{LTF}$ and $\sym{PL}_k$ are $\mathfrak{n}_1$-disjoint.
\item Let $d$ be any positive integer. There exists a positive integer $\mathfrak{n}_2$ (depending on $d$) such that $\sym{AC}^0[d]$ and $\sym{PL}_k$ are $\mathfrak{n}_2$-disjoint.
\end{enumerate}
\end{theorem}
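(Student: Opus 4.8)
The plan is to run exactly the same total-influence comparison used for Theorem~\ref{thm-1}, but with the roles reversed: here it is the plateaued class $\sym{PL}_k$ that carries the \emph{large} total influence, while the three comparison classes all have \emph{sublinear} total influence. Concretely, I would take $\mathcal{P}=I$ as in the template of Section~\ref{sec-sep-results} and, for each point, exhibit an upper bound on $I$ over the comparison class that lies strictly below the $\Omega(n-k)$ lower bound that Fact~4 supplies for $\sym{PL}_k$. The one genuinely new ingredient compared with Theorem~\ref{thm-1} is that Fact~4 gives only an asymptotic lower bound rather than an exact value, so I first fix a constant $c>0$ and a threshold $n_1$ (both depending only on $k$) such that every $g\in\sym{PL}_k$ on $n\geq n_1$ variables satisfies $I(g)\geq c(n-k)$.

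For the first point, Fact~2 gives $I(f)\leq \binom{n}{\floor{n/2}}\ceil{n/2}/2^{n-1}=O(\sqrt{n})$ for every monotone $f$. Since $\sqrt{n}=o(n-k)$ for fixed $k$, the sublinear monotone upper bound eventually falls below $c(n-k)$; choosing $\mathfrak{n}_0\geq n_1$ large enough that $\binom{n}{\floor{n/2}}\ceil{n/2}/2^{n-1}<c(n-k)$ for all $n\geq \mathfrak{n}_0$ makes $I(f)<I(g)$ for every monotone $f$ and every $g\in\sym{PL}_k$ on $n\geq\mathfrak{n}_0$ variables, hence $\mathcal{M}$ and $\sym{PL}_k$ are $\mathfrak{n}_0$-disjoint.

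The second and third points follow the identical recipe with a different sublinear bound. For $\sym{LTF}$, Fact~3 gives $I(f)\leq 2\sqrt{n}$, so I pick $\mathfrak{n}_1\geq n_1$ with $2\sqrt{n}<c(n-k)$ for all $n\geq\mathfrak{n}_1$. For $\sym{AC}^0[d]$, Fact~1 gives $I(f)\leq c'(\log n)^{d-1}$ for some constant $c'$ and all large $n$; since $d$ is fixed, $(\log n)^{d-1}=o(n-k)$, and I pick $\mathfrak{n}_2\geq n_1$ (depending on $d$ through $c'$ and the growth rate) with $c'(\log n)^{d-1}<c(n-k)$ for all $n\geq\mathfrak{n}_2$. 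In each case the strict inequality $I(f)<I(g)$ on the relevant range of $n$ yields the claimed disjointness.

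The proof carries no serious obstacle; the only point requiring care is the passage from the $\Omega(n-k)$ statement of Fact~4 to a usable, uniform inequality $I(g)\geq c(n-k)$ valid for \emph{all} $g\in\sym{PL}_k$ simultaneously, since the separation argument needs a single constant $c$ and threshold rather than per-function asymptotics. Once that is fixed, each comparison reduces to the elementary fact that a fixed sublinear function of $n$ is eventually dominated by $c(n-k)$ when $k$ is held constant, which is precisely what forces each threshold $\mathfrak{n}_i$ to exist.
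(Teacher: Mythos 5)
Your proposal is correct and follows essentially the same route as the paper: compare the $\Omega(n-k)$ lower bound on total influence for $\sym{PL}_k$ from Fact~4 against the sublinear upper bounds from Facts~1--3, after first extracting a uniform constant and threshold from the asymptotic statement. The only cosmetic difference is that the paper passes through an auxiliary bound $\Omega(n-k)=\Omega(n^{\varepsilon})$ with $\varepsilon\in(1/2,1)$, whereas you work with $c(n-k)$ directly; both are equivalent for fixed $k$.
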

\begin{proof}
\noindent{\em Proof of the first point.}
Choose an $\varepsilon \in (1/2,1)$ and let $n$ be a positive integer satisfying $n-n^{\varepsilon}\geq k$. For any $n$-variable function $f$ in $\sym{PL}_k$,
from Fact~4, $I(f) = \Omega(n-k)=\Omega(n^\varepsilon)$. 
So, there is a constant $c_1$ and an integer $n_1$, such that $I(f)\geq c_1 n^{\varepsilon}$ for all $n\geq n_1$. 
Let $g$ be any $n$-variable function in $\sym{M}$.
From Fact~2, we have $I(g)=O(\sqrt{n})$. This implies that there is a constant $c_2$ and an integer $n_2$, such that $I(g)\leq c_2n^{1/2}$ 
for all $n\geq n_2$. Since $\varepsilon>1/2$ and $c_1$ and $c_2$ are constants, there is an integer $\mathfrak{n}_0$ such that $c_1n^{\varepsilon}>c_2n^{1/2}$ for all $n\geq n_0$.
Note that $\mathfrak{n}_0$ has to satisfy $\mathfrak{n}_0-\mathfrak{n}_0^{\varepsilon} \geq k$.
So, for $n\geq \mathfrak{n}_0$, $I(f)\geq c_1 n^{\varepsilon}>c_2n^{1/2}\geq I(g)$ which implies that $f$ cannot be equal to $g$. Consequently, $f$ cannot be in $\sym{M}$. \ \\
	
\noindent{\em Proof of the second point.} The proof is similar to the first point, with the only difference being that Fact~3 is used for the argument
instead of Fact~2. \ \\

\noindent{\em Proof of the third point.} The proof is also similar to the first point, with the difference being that Fact~1 is used for the argument.
\end{proof}

Total influence can be used to separate a few other classes of Boolean functions. We briefly mention these.

An $n$-variable Boolean function is said to have $c$-linearly high entropy~\cite{shalev2018fourier} for real constant $c>0$, if $H(f)\geq cn$. Let 
$c\mbox{-}\sym{LHE}$ denote the set of all Boolean functions having $c$-linearly high entropy. It has been shown in~\cite{shalev2018fourier} that for
$f \in \sym{LHE}_c$, with $c \in (0,\frac{1}{2})$, $H[f] \leq \frac{1+c}{h^{-1}(c^2)}\cdot I(f)$, where $h^{-1}$ is the inverse of binary entropy function.
Consequently, using Fact~1, it follows that for any positive integer $d$ and $c \in (0,\frac{1}{2})$, $\sym{LHE}_c$ and $\sym{AC}^0[d]$ are almost disjoint.

The notion of random LTF was considered in~\cite{chakraborty2018fourier}, where the parameters $w_0,w_1,\ldots,w_n$ are drawn independently from either the uniform
distribution over $[-1,1]$, or from the standard normal distribution. It has been shown~\cite{chakraborty2018fourier} that for an $n$-variable random LTF $f$,
$I(f)=\Omega(\sqrt{n})$ with high probability. Combining with Fact~1 we see that $f$ is not in $\sym{AC}^0[d]$ with high probability, where $d$ is any positive integer.

\section{Conclusion\label{sec-conclu}}
We have used total influence to separate classes of Boolean functions. In particular, we have shown separation of certain classes of Boolean functions of interest
in coding theory and cryptography from classes of Boolean functions which have been considered in combinatorics and complexity theory. 

\bibliographystyle{rp}
\bibliography{draft.bib}

\begin{thebibliography}{10}
\providecommand{\url}[1]{\texttt{#1}}
\providecommand{\urlprefix}{URL }

\bibitem{boppana1997average}
Boppana, R.B.: The average sensitivity of bounded-depth circuits. Information
  processing letters  63(5),  257--261 (1997)

\bibitem{bruck1990harmonic}
Bruck, J.: Harmonic analysis of polynomial threshold functions. SIAM Journal on
  Discrete Mathematics  3(2),  168--177 (1990)

\bibitem{carlet2016cryptographic}
Carlet, C., Joyner, D., St{\u{a}}nic{\u{a}}, P., Tang, D.: Cryptographic
  properties of monotone {Boolean} functions. Journal of Mathematical
  Cryptology  10(1),  1--14 (2016)

\bibitem{celerier2012walsh}
Celerier, C., Joyner, D., Melles, C., Phillips, D., et~al.: On the
  {Walsh}-{Hadamard} transform of monotone {Boolean} functions. Tbilisi
  Mathematical Journal  5(2),  19--35 (2012)

\bibitem{chakraborty2018fourier}
Chakraborty, S., Karmalkar, S., Kundu, S., Lokam, S.V., Saurabh, N.: Fourier
  entropy-influence conjecture for random linear threshold functions. In: Latin
  American Symposium on Theoretical Informatics. pp. 275--289. Springer (2018)

\bibitem{chow1961characterization}
Chow, C.K.: On the characterization of threshold functions. In: 2nd Annual
  Symposium on Switching Circuit Theory and Logical Design (SWCT 1961). pp.
  34--38. IEEE (1961)

\bibitem{cusick2017cryptographic}
Cusick, T.W., Stanica, P.: Cryptographic {Boolean} functions and applications.
  Academic Press (2017)

\bibitem{diakonikolas2014average}
Diakonikolas, I., Raghavendra, P., Servedio, R.A., Tan, L.Y.: Average
  sensitivity and noise sensitivity of polynomial threshold functions. SIAM
  Journal on Computing  43(1),  231--253 (2014)

\bibitem{gangopadhyay2014fourier}
Gangopadhyay, S., Stanica, P.: The {Fourier} entropy-influence conjecture holds
  for a log-density 1 class of cryptographic {Boolean} functions. IACR Cryptol.
  ePrint Arch.  2014, ~54 (2014)

\bibitem{gotsman1994spectral}
Gotsman, C., Linial, N.: Spectral properties of threshold functions.
  Combinatorica  14(1),  35--50 (1994)

\bibitem{haastad1987computational}
H{\aa}stad, J.T.: Computational limitations for small-depth circuits. MIT press
  (1987)

\bibitem{21923}
{Kahn}, J., {Kalai}, G., {Linial}, N.: The influence of variables on {Boolean}
  functions. In: [Proceedings 1988] 29th Annual Symposium on Foundations of
  Computer Science. pp. 68--80 (1988)

\bibitem{linial1993constant}
Linial, N., Mansour, Y., Nisan, N.: Constant depth circuits, {Fourier}
  transform, and learnability. Journal of the ACM (JACM)  40(3),  607--620
  (1993)

\bibitem{preneel1990propagation}
Preneel, B., Van~Leekwijck, W., Van~Linden, L., Govaerts, R., Vandewalle, J.:
  Propagation characteristics of {Boolean} functions. In: Workshop on the
  Theory and Application of of Cryptographic Techniques. pp. 161--173. Springer
  (1990)

\bibitem{rothaus1976bent}
Rothaus, O.S.: On “bent” functions. Journal of Combinatorial Theory, Series
  A  20(3),  300--305 (1976)

\bibitem{shalev2018fourier}
Shalev, G.: On the {Fourier} {Entropy} {Influence} conjecture for extremal
  classes. arXiv preprint arXiv:1806.03646  (2018)

\bibitem{webster1985design}
Webster, A., Tavares, S.E.: On the design of {S}-boxes. In: Conference on the
  theory and application of cryptographic techniques. pp. 523--534. Springer
  (1985)

\bibitem{zhang2011note}
Zhang, S.: Note on the average sensitivity of monotone {Boolean} functions.
  Preprint p.~4 (2011)

\bibitem{zheng1999relationships}
Zheng, Y., Zhang, X.M.: Relationships between bent functions and complementary
  plateaued functions. In: International Conference on Information Security and
  Cryptology. pp. 60--75. Springer (1999)

\end{thebibliography}

\appendix

\section{Auto-Correlation, Influence and Average Sensitivity \label{sec-extra} }
Given an $n$-variable Boolean function $f$, its auto-correlation function $C_{f}: \{-1, 1\}^n \rightarrow [-2^n, 2^n]$ is defined to be 
\begin{equation*}
	C_{f}(\mathbf{u}) = \sum_{\mathbf{x} \in \{-1,1\}^n}f(\mathbf{x})f(\mathbf{x}\odot \mathbf{u})
\end{equation*}
where $\mathbf{x}\odot \mathbf{u}$ represents the pointwise multiplication of vectors $\mathbf{x}$ and $\mathbf{u}$. 
The auto-correlation function is related to the influence in the following manner. For $S\subseteq [n]$,
\begin{equation*}
    \sym{Inf}_{S}(f) = \frac{1}{2} - \frac{1}{2^{n+1}}C_{f}(\mathbf{w}_S).
\end{equation*}
where $\mathbf{w}_{S}=(w_1,\ldots,w_n) \in \{-1,1\}^n$ is such that $w_i=-1$ if $i\in S$ and $w_i=1$ otherwise.
%the vector that has $-1$ in the $i^{th}$ position and $1$ everywhere else. 

The sensitivity $s(f,\mathbf{x})$ of a Boolean function $f$ on input $\mathbf{x} = (x_1,x_2,\ldots x_n) \in \{-1, 1\}^n$ is defined in the following manner.
\begin{equation*}
    s(f, \mathbf{x}) = \#\{i \in [n] : f(\mathbf{x}) \neq f(\mathbf{x^{\oplus i}})\}.
\end{equation*}
The average sensitivity $s(f)$ of $f$ is given by
\begin{equation*}
	s(f) =\frac{1}{2^n} \sum_{\mathbf{x} \in \{-1,1\}^n} s(f,\mathbf{x}).
\end{equation*}
It is easy to see that $I(f) = s(f)$~\cite{linial1993constant}.

\end{document}